\definecolor{Gray}{gray}{0.9} 
\definecolor{LightCyan}{rgb}{0.88,1,1} 
\newtheorem{prop}{Proposition}
\def\BibTeX{{\rm B\kern-.05em{\sc i\kern-.025em b}\kern-.08em
    T\kern-.1667em\lower.7ex\hbox{E}\kern-.125emX}}
\begin{document}


\title{\huge Optimizing Unlicensed Band Spectrum Sharing With Subspace-Based Pareto Tracing
}

\author[$\dag$]{Zachary J. Grey\thanks{This work is U.S. Government work and not protected by U.S. copyright.}}
\author[$\S *$ ]{Susanna Mosleh}
\author[$\ddag$]{Jacob D. Rezac}
\author[$\ddag$]{Yao Ma}
\author[$\ddag$]{Jason B. Coder}
\author[$\dag$]{Andrew M. Dienstfrey}
\affil[$\dag$]{\footnotesize Applied and Computational Mathematics Division, Information Technology Lab, National Institute of Standards and Technology, Boulder, CO, 80305}
\affil[$\S$]{Associate, RF Technology Division, Communications Technology Lab, National Institute of Standards and Technology, Boulder, CO, 80305}
 \affil[$*$]{Department of Physics, University of Colorado, Boulder, CO, 80302}
\affil[$\ddag$]{\footnotesize RF Technology Division, Communications Technology Lab, National Institute of Standards and Technology, Boulder, CO, 80305}
\normalsize

\maketitle

\begin{abstract}
To meet the ever-growing demands of data throughput for forthcoming and deployed wireless networks, new wireless technologies like Long-Term Evolution License-Assisted Access (LTE-LAA) operate in shared and unlicensed bands. However, the LAA network must co-exist with incumbent IEEE 802.11 Wi-Fi systems. We consider a coexistence scenario where multiple LAA and Wi-Fi links share an unlicensed band. We aim to improve this coexistence by maximizing the key performance indicators (KPIs) of these networks simultaneously via dimension reduction and multi-criteria optimization. These KPIs are network throughputs as a function of medium access control protocols and physical layer parameters. We perform an exploratory analysis of coexistence behavior by approximating active subspaces to identify low-dimensional structure in the optimization criteria, i.e., few linear combinations of parameters for simultaneously maximizing KPIs. We leverage an aggregate low-dimensional subspace parametrized by approximated active subspaces of throughputs to facilitate multi-criteria optimization. The low-dimensional subspace approximations inform visualizations revealing convex KPIs over mixed active coordinates leading to an analytic Pareto trace of near-optimal solutions.


\end{abstract}

\begin{IEEEkeywords}
LTE-LAA, Wi-Fi, wireless coexistence, MAC and physical layer parameters, active subspace, Pareto trace
\end{IEEEkeywords}

\section{Introduction}
As wireless communications evolve and proliferate into our daily lives, the demand for spectrum is growing dramatically. To accommodate this growth, wireless device protocols are beginning to transition from a predominantly-licensed spectrum to a shared approach in which use of the unlicensed spectrum bands appears to be inevitable. The main bottleneck of this approach is balancing new network paradigms with incumbent unlicensed networks, such as Wi-Fi.

As one strategy to manage spectrum scarcity, providers are beginning to operate Long-Term Evolution License-Assisted Access (LTE-LAA) in unlicensed bands\footnote{Our focus is the operation of LTE base stations in an unlicensed band. However, these base stations may have permission to utilize a licensed band as well.}. Even though operating LAA in unlicensed bands improves spectral-usage efficiency, it may have an enormous influence on Wi-Fi operation and create a number of challenges for both Wi-Fi and LTE networks as a means of constructively sharing the spectrum. Understanding and addressing these challenges calls for a deep dive into the operations and parameter selection of both networks in the medium access control (MAC) and physical (PHY) layers.

There have been many investigations of fairness in spectrum sharing among LAA and Wi-Fi networks \cite{Quek-Access2016,Cano-Acm2017,Mehrnoush-2018}---although, these works do not consider optimizing key performance indicators (KPIs). Contrary to \cite{Quek-Access2016,Cano-Acm2017,Mehrnoush-2018}, the authors in \cite{Gao-2017} and \cite{Gao-2019} maximize LAA throughput and total network sum rate, respectively, over contention window sizes of both networks while guaranteeing the Wi-Fi throughput satisfies a threshold. However \cite{Gao-2017} and \cite{Gao-2019} optimize only a single MAC layer parameter. A multi-criteria optimization problem was formulated in \cite{Yin-2016} to satisfy the quality of service requirements of LAA eNodeBs by investigating the trade-off between the co-channel interference in the licensed band and the Wi-Fi collision probability in the unlicensed band. However, maximizing the Wi-Fi throughput was omitted. Considering both PHY and MAC layer parameters, \cite{Mosleh-GC-2019} maximizes the weighted sum rate of an LAA network subject to Wi-Fi throughput constraint with respect to
the fraction of time that LAA is active. Alternatively, advantageous sharing of spectrum can be modeled as a multi-criteria optimization problem where both Wi-Fi and LAA KPIs, such as network throughputs on the unlicensed bands, are simultaneously maximized with respect to their MAC and PHY layer parameters. The set of maximizing arguments quantify the inherent trade-off between LAA and Wi-Fi throughputs.

The multi-criteria optimization formalism we propose is further complicated by the high dimensionality of the input space. Specifically, our model requires $17$ MAC and PHY parameters to characterize the coexistence performance. Previous experience suggests that not all parameter combinations are equally important in determining KPIs quality. Active subspaces supplement an exploratory approach for determining parameter combinations which change KPI values the most, on average. The sets of parameter combinations defined by the active subspaces help inform KPI approximations and visualizations over an aggregate low-dimension subspace---simplifying the multi-criteria optimization. 

We incorporate active subspace dimension reduction into a multi-criteria optimization framework to analyze the shared spectrum coexistence problem involving Wi-Fi and LAA. The proposed technique is extensible to many other spectrum sharing and communication systems, but LTE-LAA is used as an example. The dimension reduction supplements a trade-off analysis of network throughputs by computing a Pareto trace. The Pareto trace provides a \emph{continuous} approximation of Pareto optimal points in a common domain of a multi-criteria problem \cite{boyd2004convex, bolten2020tracing}---resulting in a near-best trade-off between differing throughputs. This offers a continuous description of a parameter subset which quantifies high quality performance of both networks, facilitated by a dimension reduction.


\section{System Model and Assumptions} \label{sec:sys_model}
We consider a downlink coexistence scenario where two mobile network operators (MNOs) operate over the same shared unlicensed industrial, scientific, and medical radio band. Previously, unlicensed bands were dominated by Wi-Fi traffic and, occasionally, used by commercial cellular carriers
for offloading data otherwise communicated via
LTE in the licensed spectrum. Lately, LTE carriers are choosing to operate in unlicensed bands in addition to data offloading. We assume the MNOs use time sharing to simultaneously operate in this band and we aim to analyze competing trade-offs in throughputs of the Wi-Fi and LTE systems. The network throughput is a function of both physical and MAC layer parameters. We introduce the parameters defining the network topology, the PHY layer, MAC
layer protocols, and briefly discuss the relation of these variables to network throughput.

We consider a coexistence scenario in which the LAA network consists of $n_{L}$ eNodeBs, while the Wi-Fi network is composed of $n_{W}$ access points (APs)\footnote{We are primarily focused on the operation of cellular base stations in the unlicensed bands. However, LTE base stations may have permission to utilize a licensed band as well.}. The eNodeBs and APs are randomly distributed over a particular area, while LAA user equipment (UEs) and Wi-Fi clients/stations (STAs) are uniformly and independently distributed around each eNodeB and AP, respectively. Each transmission node serves a set of single antenna UEs/STAs and the user association is based on the received power. We assume \textit{(i)} both Wi-Fi and LAA are in the saturated traffic condition, i.e., at least one packet is waiting to be sent, \textit{(ii)} there are neither hidden nodes nor false alarm/miss detection problems in the network\footnote{We assume perfect spectrum sensing in both systems. The impact of imperfect sensing is beyond the scope of this paper and investigating the effect of sensing errors is an important topic for future work.}, and \textit{(iii)} the channel knowledge is ideal, so, the only source of unsuccessful transmission is collision. The physical data rate of the LAA and Wi-Fi networks is a function of signal-to-interference-plus-noise ratio (SINR) that is related to and changes with the link distances and propagation model. Any changes in data rates lead to different network throughput.

The medium access key feature in both Wi-Fi and LAA involves the station accessing the medium to sense the channel by performing clear channel assessment prior to transmitting. The station only transmits if the medium is determined to be idle. Otherwise, the transmitting station refrains from transmitting data until it senses the channel is available. Although LAA and Wi-Fi technologies follow similar channel access procedures, they utilize different carrier sense schemes, different channel sensing threshold levels, and different channel contention parameters, leading to different unlicensed channel access probabilities and thus, different throughputs.


Conforming with the analytical model in \cite{Mosleh-GC-2019, Mosleh-VTC2020,Mosleh-ICC2020}, the LAA and Wi-Fi throughputs, indicated respectively by $S_{\mathcal{L}}$ and $S_{\mathcal{W}}$, are functions of $m$ MAC and PHY layer parameters in a vector $\boldsymbol{\theta}$ conditioned on fixed values in a vector $\boldsymbol{x}$,
\begin{align} \label{eq:throughputs}
\nonumber
S_{\mathcal{L}}&: \mathbb{R}^m \times \lbrace \boldsymbol{x} \rbrace \rightarrow \mathbb{R}:(\boldsymbol{\theta}, \boldsymbol{x}) \mapsto S_{\mathcal{L}}(\boldsymbol{\theta}; \boldsymbol{x}),\\ 
S_{\mathcal{W}}&: \mathbb{R}^m\times \lbrace \boldsymbol{x} \rbrace \rightarrow \mathbb{R}:(\boldsymbol{\theta}, \boldsymbol{x}) \mapsto S_{\mathcal{W}}(\boldsymbol{\theta}; \boldsymbol{x}).
\end{align}
For this application, LAA throughput $S_{\mathcal{L}}$ and Wi-Fi throughput $S_{\mathcal{W}}$ are only considered functions of $m$ variable parameters in $\boldsymbol{\theta}$. This numerical study considers $m=17$ parameters summarized in Table \ref{Tb1}. 
To simplify this study, we fix the remaining parameters in $\boldsymbol{x}$ governing the majority of the physical characteristics of the communication network---constituting a fixed \emph{scenario} $\boldsymbol{x}$ for a parameter study over $\boldsymbol{\theta}$'s. 

The problem of interest is to maximize a convex combination of network throughputs for the fixed scenario $\boldsymbol{x}$ over the MAC and PHY parameters $\boldsymbol{\theta}$ in a multi-criteria optimization. Mathematically, we define the \emph{Pareto front} by the following optimization problem:
\begin{equation} \label{eq:multiopt}
    \underset{\boldsymbol{\theta} \in \mathcal{D}  \subset \mathbb{R}^m}{\text{maximize}} \,\,tS_{\mathcal{L}}(\boldsymbol{\theta}; \boldsymbol{x}) + (1-t)S_{\mathcal{W}}(\boldsymbol{\theta}; \boldsymbol{x}),
\end{equation}
for all $t \in [0,1]$ where $\mathcal{D}$ is the parameter domain defined by the ranges in Table \ref{Tb1}. The goal is to quantify a smooth trajectory $\boldsymbol{\theta}(t)$ through MAC and PHY parameter space, or \emph{trace} \cite{bolten2020tracing}, such that the convex combination of throughputs is maximized over a map $\boldsymbol{\theta}:[0,1] \rightarrow \mathcal{D}$. In Section \ref{sec:pareto_trace} we formalize this notion of a trace. In Section \ref{sec:AS}, we summarize an exploratory approach for understanding to what extent problem \eqref{eq:multiopt} is convex \cite{boyd2004convex} and how we can intuitively regularize. The empirical evidence generated through visualization and dimension reduction provide justification for convex quadratic approximations and subsequent quadratic trace in Section \ref{sec:Numerics}. 
\begin{table*}[t]
\captionsetup{font=footnotesize}
 \caption{MAC and PHY parameters influencing throughputs }\label{Tb1}
 \begin{threeparttable}
 \centering
 \scalebox{0.98}{
 \scriptsize
  \begin{tabular}{c|p{3.4cm} |c|c}
    \hline
    Parameters & Description &Bounds & Nominal \\ \hline
    \rowcolor{LightCyan}
     $\theta_1$ & Wi-Fi min contention window size&(8, 1024)& 516\\[0.1cm] 
      \rowcolor{LightCyan}
  $\theta_2$ &  LAA min contention window size&(8, 1024)& 516\\[0.1cm] 
   \rowcolor{LightCyan}
    $\theta_3$ & Wi-Fi max back-off stage** &($10^{-4}$, 8)& 4\\[0.1cm] 
     \rowcolor{LightCyan}
    $\theta_4$ & LAA max back-off stage**&($10^{-4}$, 8)& 4\\[0.1cm] 
    \rowcolor{LightCyan}
    $\theta_{5}$ & Distance between transmitters &(10 m, 20 m)& 15 m\\[0.1cm]
    \rowcolor{LightCyan}
    $\theta_{6}$ & Minimum distance between transmitters and receivers&(10m, 35 m)& 22.5 m\\[0.1cm]
    \rowcolor{LightCyan}
    $\theta_{7}$ & Height of each LAA eNodeB and Wi-Fi AP&(3 m, 6 m) & 4.5 m\\[0.1cm]
    \rowcolor{LightCyan}
    $\theta_{8}$ & Height of each LAA UEs and Wi-Fi STAs & (1 m, 1.5 m) & 1.25 m\\[0.1cm]
        \rowcolor{LightCyan}
    $\theta_9$ & Standard deviation of shadow fading& (8.03, 8.29) & 8.16 \\[0.1cm]
    \rowcolor{LightCyan}
    $\theta_{10}$ & $k_{\text{LOS}}^{\star}$
    & (45.12, 46.38)& 45.75\\[0.1cm]
    \rowcolor{LightCyan}
    $\theta_{11}$ & $k_{\text{NLOS}}^{\star}$
    & (34.70, 46.38) & 40.54\\[0.1cm]
  \end{tabular} \hspace{0.12in}
  \begin{tabular}{c|p{3.4cm} |c|c}
    \hline
    Parameters & Description &Bounds & Nominal \\ \hline
    \rowcolor{LightCyan}
    $\theta_{12}$ &  $\alpha_{\text{LoS}}^{\star}$ 
    & (17.3, 21.5) & 19.4 \\[0.1cm]
    \rowcolor{LightCyan}
    $\theta_{13}$ &  $\alpha_{\text{NLoS}}^{\star}$
    &(31.9, 38.3) & 35.1\\[0.1cm] 
    \rowcolor{LightCyan}
    $\theta_{14}$ & Transmitter antenna gain** &($10^{-4}$ dBi, 5 dBi) & 2.5 dBi\\[0.1cm]
    \rowcolor{LightCyan}
    $\theta_{15}$ & Noise figure at each receiver&(5 dB, 9 dB) & 7 dB\\[0.1cm]
        \rowcolor{LightCyan}
    $\theta_{16}$ & Transmit power at each LAA eNodeB and Wi-Fi AP&(18 dBm, 23 dBm) & 20.5 dBm\\[0.1cm]
    \rowcolor{LightCyan}
    $\theta_{17}$ & Carrier channel bandwidth &(10 MHz, 20 MHz) & 15 MHz\\ [0.1cm]
    \rowcolor{Gray}
    $x_1$ & Number of LAA eNodeBs ($n_L$) &--& 6\\[0.1cm] 
    \rowcolor{Gray}
    $x_2$ & Number of Wi-Fi APs ($n_W$) &--& 6\\[0.1cm] 
    \rowcolor{Gray}
    $x_3$ & Number of LAA UEs &--& 6\\[0.1cm] 
    \rowcolor{Gray}
    $x_4$ & Number of Wi-Fi STAs &--&6\\[0.1cm] 
    \rowcolor{Gray}
    $x_5$ & Number of unlicensed channels &--& 1\\[0.1cm] 
    \rowcolor{Gray}
    $x_{8}$ &  Scenario width &--&120 m\\[0.1cm]
    \rowcolor{Gray}
    $x_{9}$ &  Scenario height &--&80 m
    \end{tabular}}
     \begin{tablenotes}
       \scriptsize
           \item \text{Note: parameter ranges are established by 3GPP TS 36.213 V15.6.0 and 3GPP TR. 36.889 v13.0.0.}
      \item  \text{${}^{\star}$The path-loss for both line-of-sight (LoS) and non-LoS scenarios can be computed as $k + \alpha\log_{10}(d)$ in dB, where $d$ is the distance in meters between the transmitter }
      \item \text{and the receiver. **Note that typical lower bounds are taken as zero however we transform parameters to a log-space and supplement with a sufficiently small lower bound.}
    \end{tablenotes}
   \end{threeparttable}
\end{table*}
\normalsize

\section{Pareto Tracing} \label{sec:pareto_trace}
We refer to \eqref{eq:multiopt} as a maximization of the convex total objective or \emph{scalarization}. In this case, we have a single degree of freedom to manipulate the scalarization parametrized by $t \in [0,1]$ such that
$
J_{t}(\boldsymbol{\theta}) = (1- t) S_{\mathcal{W}}(\boldsymbol{\theta}; \boldsymbol{x}) + t S_{\mathcal{L}}(\boldsymbol{\theta}; \boldsymbol{x})
$.
By virtue of the necessary conditions for a (locally) Pareto optimal solution, we must determine $\boldsymbol{\theta} \in \mathbb{R}^m$ critical for $J_t$ which necessarily implies $\nabla J_t(\boldsymbol{\theta}) = (1- t)\nabla S_{\mathcal{W}}(\boldsymbol{\theta}) + t\nabla S_{\mathcal{L}}(\boldsymbol{\theta}) = \boldsymbol{0}$ where $\boldsymbol{0} \in \mathbb{R}^m$ is a vector of zeros and $\nabla$ is the gradient with respect to $\boldsymbol{\theta}$---this is referred to as the \emph{stationarity} condition. Moreover, denoting the Hessian matrices $\nabla^2 S_{\mathcal{L}}(\boldsymbol{\theta}),\, \nabla^2 S_{\mathcal{W}}(\boldsymbol{\theta}),\,\nabla^2 J_t(\boldsymbol{\theta}) \in \mathbb{R}^{m \times m}$, $\boldsymbol{\theta}$ satisfies strict second order $J_t$-optimality and is a locally (unique) Pareto optimal solution if $\nabla^2 J_t(\boldsymbol{\theta}) = (1- t) \nabla^2 S_{\mathcal{W}}(\boldsymbol{\theta}; \boldsymbol{x}) + t \nabla^2 S_{\mathcal{L}}(\boldsymbol{\theta}; \boldsymbol{x})$ is (symmetric) negative definite \cite{bolten2020tracing, boyd2004convex}.

First, we review the necessary conditions for a continuous (in $t$) solution to \eqref{eq:multiopt}. Provided the set of all Pareto optimal solutions is convex, we can continuously parametrize the set of all Pareto optimal solutions to \eqref{eq:multiopt} as $t \mapsto \boldsymbol{\theta}(t)$ for all $t \in [0,1]$ considering $t$ as pseudo-time for an analogous trajectory through parameter space---launching from one minimizing argument to another.

\begin{prop}\label{prop:ODE}
	Given full rank $\nabla^2   J_t(\boldsymbol{\theta}(t)) \in \mathbb{R}^{m \times m}$, the one-dimensional immersed submanifold parametrized by $\boldsymbol{\theta}(t) \in \mathbb{R}^m$ for all $t \in [0,1]$ is necessarily Pareto optimal such that
	$$
	\nabla^2 J_t(\boldsymbol{\theta}(t))\dot{\boldsymbol{\theta}}(t) = \nabla S_{\mathcal{W}}(\boldsymbol{\theta}(t)) - \nabla  S_{\mathcal{L}}(\boldsymbol{\theta}(t)).
	$$
\end{prop}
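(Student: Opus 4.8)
The plan is to treat the first‑order (stationarity) condition as a smooth implicit equation in the joint variable $(t,\boldsymbol{\theta})$ and differentiate it along the solution curve. Concretely, define $F:[0,1]\times\mathcal{D}\to\mathbb{R}^m$ by $F(t,\boldsymbol{\theta}) = \nabla J_t(\boldsymbol{\theta}) = (1-t)\nabla S_{\mathcal{W}}(\boldsymbol{\theta}) + t\,\nabla S_{\mathcal{L}}(\boldsymbol{\theta})$. As recalled just before the proposition, any locally Pareto optimal $\boldsymbol{\theta}$ for scalarization weight $t$ must satisfy $F(t,\boldsymbol{\theta})=\boldsymbol{0}$, so the Pareto optimal points lie in the zero set $F^{-1}(\boldsymbol{0})$, and the task is to parametrize that set by $t$ and identify the governing ODE.

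First I would establish the submanifold structure. The Jacobian of $F$ splits as $DF(t,\boldsymbol{\theta}) = [\,\partial_t F \mid \partial_{\boldsymbol{\theta}}F\,]$ with $\partial_{\boldsymbol{\theta}}F(t,\boldsymbol{\theta}) = \nabla^2 J_t(\boldsymbol{\theta})$. By hypothesis $\nabla^2 J_t(\boldsymbol{\theta}(t))$ has full rank $m$ along the curve, so $DF$ has rank $m$ there; hence $\boldsymbol{0}$ is a regular value of $F$ in a neighborhood of the curve and, by the regular value / preimage theorem, $F^{-1}(\boldsymbol{0})$ is a one‑dimensional embedded submanifold of $[0,1]\times\mathcal{D}$. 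Because the block $\partial_{\boldsymbol{\theta}}F=\nabla^2 J_t$ is in fact invertible, the implicit function theorem upgrades this: locally the zero set is the graph of a $C^{k-1}$ map $t\mapsto\boldsymbol{\theta}(t)$ whenever $S_{\mathcal{L}},S_{\mathcal{W}}\in C^{k}$, and uniqueness of the implicit solution patches these local graphs into a single maximal curve over the connected interval, whose image in $\mathbb{R}^m$ is the claimed one‑dimensional immersed submanifold.

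Next I would differentiate the identity $F(t,\boldsymbol{\theta}(t))\equiv\boldsymbol{0}$ in $t$ by the chain rule,
\[
\partial_t F(t,\boldsymbol{\theta}(t)) + \partial_{\boldsymbol{\theta}}F(t,\boldsymbol{\theta}(t))\,\dot{\boldsymbol{\theta}}(t) = \boldsymbol{0},
\]
where $\partial_{\boldsymbol{\theta}}F = \nabla^2 J_t(\boldsymbol{\theta}(t))$ and $\partial_t F(t,\boldsymbol{\theta}) = \tfrac{\partial}{\partial t}\big[(1-t)\nabla S_{\mathcal{W}}(\boldsymbol{\theta}) + t\,\nabla S_{\mathcal{L}}(\boldsymbol{\theta})\big] = \nabla S_{\mathcal{L}}(\boldsymbol{\theta}) - \nabla S_{\mathcal{W}}(\boldsymbol{\theta})$. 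Moving the $\partial_t F$ term to the other side gives exactly
\[
\nabla^2 J_t(\boldsymbol{\theta}(t))\,\dot{\boldsymbol{\theta}}(t) = \nabla S_{\mathcal{W}}(\boldsymbol{\theta}(t)) - \nabla S_{\mathcal{L}}(\boldsymbol{\theta}(t)),
\]
the asserted ODE. Since $\nabla^2 J_t$ is invertible one may equivalently write $\dot{\boldsymbol{\theta}}(t) = \nabla^2 J_t(\boldsymbol{\theta}(t))^{-1}\big(\nabla S_{\mathcal{W}}(\boldsymbol{\theta}(t)) - \nabla S_{\mathcal{L}}(\boldsymbol{\theta}(t))\big)$, an explicit first‑order system whose right side is as smooth as the data; with an initial condition $\boldsymbol{\theta}(0)$ maximizing $S_{\mathcal{W}}$ this pins down the trace uniquely by standard existence and uniqueness for ODEs, and because by the stationarity relation this right side reduces (for $t\in(0,1)$) to $-\tfrac{1}{1-t}\nabla S_{\mathcal{L}}(\boldsymbol{\theta}(t))$, it is nonzero away from points that are simultaneously stationary for both throughputs, which is precisely what makes $\boldsymbol{\theta}(\cdot)$ an immersion.

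The main obstacle is really one of scope and of local‑versus‑global reasoning rather than a hard computation. On scope: the proposition states only a \emph{necessary} condition, so I should resist proving the curve is genuinely Pareto optimal — that would require the strict second‑order condition ($\nabla^2 J_t$ negative definite) and convexity of the Pareto set invoked in the surrounding text, neither of which enters here. On locality: the implicit function theorem is inherently local, so tracing $\boldsymbol{\theta}(t)$ across all of $[0,1]$ presupposes, as the hypothesis does, that $\nabla^2 J_t(\boldsymbol{\theta}(t))$ stays full rank and that the curve never leaves $\mathcal{D}$; where either fails the trace may bifurcate or terminate, and the ODE describes it only on the maximal subinterval on which these regularity assumptions hold.
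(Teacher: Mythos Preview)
Your proposal is correct and follows essentially the same route as the paper: both differentiate the stationarity identity $\nabla J_t(\boldsymbol{\theta}(t))=\boldsymbol{0}$ in $t$ via the chain/product rule to obtain the displayed ODE. The only cosmetic difference is that you justify the one-dimensional submanifold structure via the regular value/implicit function theorem applied to $F(t,\boldsymbol{\theta})=\nabla J_t(\boldsymbol{\theta})$, whereas the paper invokes the flowout theorem (Lee, Thm.~9.20) for the vector field $\nabla^2 J_t^{-1}(\nabla S_{\mathcal{W}}-\nabla S_{\mathcal{L}})$; both lead to the same conclusion under the full-rank hypothesis.
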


\begin{proof}
	Differentiating the stationarity condition by composing in pseudo-time, $\nabla J_t \circ \boldsymbol{\theta}(t) = \boldsymbol{0}$, results in
	\small
\begin{align*}
\boldsymbol{0}&{}=\frac{d}{dt}\left(\nabla J_t \circ \boldsymbol{\theta}(t)\right)\\
&{}= \frac{d}{dt} (1- t)\left(\nabla S_{\mathcal{W}} \circ \boldsymbol{\theta}(t) \right) + \frac{d}{dt}t\left(\nabla S_{\mathcal{L}}\circ \boldsymbol{\theta}(t)\right)\\
&{}=-\nabla S_{\mathcal{W}} \circ \boldsymbol{\theta}(t) + (1-t)\left(\nabla^2_{\boldsymbol{\theta}} S_{\mathcal{W}} \circ \boldsymbol{\theta}(t)\right)\dot{\boldsymbol{\theta}}(t)\\
&\hspace{1cm} +\nabla S_{\mathcal{L}}\circ \boldsymbol{\theta}(t) + t\left(\nabla^2_{\boldsymbol{\theta}}S_{\mathcal{L}} \circ \boldsymbol{\theta}(t)\right)\dot{\boldsymbol{\theta}}(t)\\
&{}=\nabla^2 J_t(\boldsymbol{\theta}(t))\dot{\boldsymbol{\theta}}(t) - \left(\nabla S_{\mathcal{W}} \circ \boldsymbol{\theta}(t) - \nabla S_{\mathcal{L}} \circ \boldsymbol{\theta}(t)\right).
\end{align*}
\normalsize

Then, the flowout along necessary Pareto optimal solutions constitutes an immersed submanifold of $\mathbb{R}^m$ nowhere tangent to the integral curve generated by the system of differential equations (See \cite{Lee2003}, Thm. 9.20)---i.e., assuming $\nabla^2 J_t(\boldsymbol{\theta}(t))$ is full rank, $\nabla^2   J_t(\boldsymbol{\theta}(t))^{-1}\left(\nabla   S_{\mathcal{W}}(\boldsymbol{\theta}(t)) - \nabla   S_{\mathcal{L}}(\boldsymbol{\theta}(t))\right)$ is the infintesimal generator of a submanifold $\mathcal{P} \subseteq \mathbb{R}^m$ of locally Pareto optimal solutions contained in the flowout.
\end{proof}

We note that the system of equations in Prop. \ref{prop:ODE}, proposed in \cite{bolten2020tracing}, constitutes a set of \textit{necessary} conditions for optimality. The utility of Prop. \ref{prop:ODE} offers an interpretation that the solution set (if it exists) constitutes elements of a submanifold in $\mathbb{R}^m$. This formalism establishes a theoretical foundation for the use of manifold learning or splines over sets of points which are approximately Pareto optimal.

Suppose $S_{\mathcal{W}}$ and $S_{\mathcal{L}}$ are well approximated by convex quadratics as \textit{surrogates}, 
\begin{align*}
-S_{\mathcal{L}}(\boldsymbol{\theta}; \boldsymbol{x})&\approx \boldsymbol{\theta}^T \boldsymbol{Q}_{\mathcal{L}} \boldsymbol{\theta} + \boldsymbol{a}_{\mathcal{L}}^T\boldsymbol{\theta} + c_{\mathcal{L}} \\\nonumber
-S_{\mathcal{W}}(\boldsymbol{\theta}; \boldsymbol{x})&\approx \boldsymbol{\theta}^T \boldsymbol{Q}_{\mathcal{W}} \boldsymbol{\theta} + \boldsymbol{a}_{\mathcal{W}}^T\boldsymbol{\theta} + c_{\mathcal{W}},
\end{align*}
such that $\boldsymbol{Q}_{\mathcal{L}}, \, \boldsymbol{Q}_{\mathcal{W}} \in S_{++}^m$,  $\boldsymbol{a}_{\mathcal{L}},\, \boldsymbol{a}_{\mathcal{W}} \in \mathbb{R}^m$, and $c_{\mathcal{L}},\,c_{\mathcal{W}} \in \mathbb{R}$ where $S_{++}^m$ denotes the collection of $m$-by-$m$ postive definite matrices---note the change in sign convention. Consequently, applying the sationarity condition to the form of the quadratic approximations results in the closed form solution of the Pareto trace,
\begin{equation} \label{eq:quad_trace}
    \boldsymbol{\theta}(t) = \frac{1}{2}\left[t \boldsymbol{Q}_{\mathcal{L}} + (1-t)\boldsymbol{Q}_{\mathcal{W}}\right]^{-1}\left[(t-1) \boldsymbol{a}_{\mathcal{W}} - t\boldsymbol{a}_{\mathcal{L}}\right],
\end{equation}
referred to in this context as a \textit{quadratic trace}. The quadratic trace is derived from the stationarity condition and thus consistent with Prop. \ref{prop:ODE}. The challenge in our context with otherwise unknown forms of $S_{\mathcal{W}}$ and $S_{\mathcal{L}}$ is: \emph{how can we assess conditions informing \eqref{eq:quad_trace} or regularize the solve to guarantee these conditions?} We offer an approach which assess these conditions by visualization and subsequent regularization through \textit{subspace-based dimension reduction} to inform convex quadratic surrogates as approximations satisfying Prop. \ref{prop:ODE}.

\section{Active Subspaces} \label{sec:AS}
Following the development in \cite{Constantine2015}, we introduce an exploratory method for simplifying the problem statement in \eqref{eq:multiopt}. For ease of exposition, in this section we denote either scalar-valued throughput function by $S:\mathcal{D} \subset \mathbb{R}^m \rightarrow \mathbb{R}$ with compact domain $\mathcal{D}$ and assume all integrals and derivatives used in this discussion exist as measurable functions. The main results of the section rely on an eigendecomposition of the symmetric positive semi-definite matrix $\boldsymbol{C}\in\mathbb{R}^{m\times m}$ defined as
\begin{equation} \label{eq:C}
\boldsymbol{C} = \int_{\mathcal{D}} \nabla S(\boldsymbol{\theta}) \nabla S(\boldsymbol{\theta})^T d\boldsymbol{\theta}
\end{equation}
with entries $C_{ij} = \int_\mathcal{D} \left(\partial S/\partial \theta_i\right)\vert_{\boldsymbol{\theta}}\left(\partial S/\partial \theta_j\right)\vert_{\boldsymbol{\theta}} d\boldsymbol{\theta}$ for $i,j=1,\ldots,m$. For this application, the integral is taken uniformly over $\mathcal{D}$. The compact domain  $\mathcal{D}$ is a hyper-rectangle constructed\footnote{The chosen definition of \eqref{eq:C} weights all parameter combinations equally over $\mathcal{D}$ and restricts the integration to feasible values (as summarized in Table \ref{Tb1}). Alternative choices are available in scenarios where it is more appropriate to weight parameters differently, but uniform is suitable to our application.} from the Cartesian product of lower and upper bounds, $\theta_{i,\ell} \leq \theta_i \leq \theta_{i,u}$ for all $i=1,\dots,m$.
\subsection{Interpretability of Active Subspaces}
If $\text{rank}(\boldsymbol{C}) = r < m$, its eigendecomposition $\boldsymbol{C} = \boldsymbol{W}\boldsymbol{\Lambda} \boldsymbol{W}^T$ with orthogonal $\boldsymbol{W}$ satisfies $\boldsymbol{\Lambda} = \mathrm{diag}(\lambda_1,\dots,\lambda_m)$ with 
\begin{equation} \label{eq:eig_decay}
    \lambda_1 \geq \lambda_2 \geq \dots \geq \lambda_r > \lambda_{r+1} = \dots = \lambda_m = 0.
\end{equation}
This defines two sets of important
$
\boldsymbol{W}_r = [\boldsymbol{w}_1 \dots \boldsymbol{w}_r] \in \mathbb{R}^{m \times r}
$
and unimportant
$
\boldsymbol{W}_r^{\perp} = [\boldsymbol{w}_{r+1} \dots \boldsymbol{w}_m] \in \mathbb{R}^{m \times (m-r)}
$
\emph{directions} over the domain. The column span of $\boldsymbol{W}_r$ and $\boldsymbol{W}_r^{\perp}$ constitute the \emph{active} and \emph{inactive} subspaces, respectively. Note that \eqref{eq:C} depends on a single scalar-valued response and potentially differs for the separate throughputs in \eqref{eq:throughputs}.

What do we mean by \emph{important directions}? Partitioning $\boldsymbol{W}$ into $m$ orthonormal eigenvectors $\boldsymbol{w}_i \in \mathbb{R}^m$ representing the columns, $\boldsymbol{W} = [\boldsymbol{w}_1 \dots \boldsymbol{w}_m]$, we can simplify $\boldsymbol{w}_i^T\boldsymbol{C}\boldsymbol{w}_i$ to obtain an expression for the eigenvalues,
\begin{equation}\label{eq:eigs}
    \lambda_i = \int_{\mathcal{D}}\left(\boldsymbol{w}_i^T \nabla S(\boldsymbol{\theta})\right)^2 d\boldsymbol{\theta}.
\end{equation}
Reinterpreting the integral by definition of the expectation, $\mathbb{E}[f(\boldsymbol{\theta})] = \int_{\mathcal{D}}f(\boldsymbol{\theta}) d\boldsymbol{\theta}$ for any measurable $f$, the eigenvalues can be interpreted as the mean squared directional derivative of $S$ in the direction of $\boldsymbol{w}_i \in \mathbb{R}^m$. Precisely, 
the directional derivative can be written $dS_{\boldsymbol{\theta}}[\boldsymbol{w}] = \boldsymbol{w}^T \nabla S(\boldsymbol{\theta})$
and we obtain $\lambda_i = \mathbb{E}[dS^2_{\boldsymbol{\theta}}[\boldsymbol{w}_i]]$---the $i^{th}$ eigenvalue is the mean squared directional derivative in the direction of the $i^{th}$ eigenvector.
Thus, the inherent ordering \eqref{eq:eig_decay} of the eigenpairs $\lbrace(\lambda_i, \boldsymbol{w}_i)\rbrace_{i=1}^m$ indicate directions $\boldsymbol{w}_i$ which change the function $S$ the most, on average, up to the $r+1,\dots,m$ directions which \emph{do not change the function at all} \cite{Constantine2015}. In other words, the directional derivatives over the inactive subspace, $\mathrm{Range}(\boldsymbol{W}_r^{\perp})$, are zero provided the corresponding eigenvalues are zero. In fact, either throughput response from \eqref{eq:throughputs} is referred to as a \emph{ridge function} over $\boldsymbol{\theta}$'s if and only if $dS_{\boldsymbol{\theta}}[\boldsymbol{w}] = 0$ for all $\boldsymbol{w} \in \text{Null}(\boldsymbol{W}_r^T)$.  

\subsection{Ridge Approximations}
Naturally, if the trailing eigenvalues are merely small as opposed to identically zero, then the function changes much less over the inactive directions which have smaller directional derivatives. This lends itself to a framework for reduced-dimension approximation of the function such that we only approximate changes in the function over the first $r$ active directions and take the approximation to be constant over the trailing $m-r$ inactive directions \cite{Constantine2015}. Such an approximation to $S$ is called a \emph{ridge approximation} by a function $H$ referred to as the \emph{ridge profile}, i.e.,
\begin{equation}\label{eq:ridge_approx}
    S(\boldsymbol{\theta}) \approx H(\boldsymbol{W}_r^T \boldsymbol{\theta}).
\end{equation}
In the event that the trailing eigenvalues of $\boldsymbol{C}$ are zero, then the approximation is exact for a particular $H$ \cite{Constantine2015}. 

In either case, approximation or an exact ridge profile, the possibility of reducing dimension by projection to fewer, $r< m$, \emph{active coordinates} $\boldsymbol{y} = \boldsymbol{W}_r^T\boldsymbol{\theta} \in \mathbb{R}^r$ can enable higher-order polynomial approximations for a given data set of coordinate-output pairs and an ability to \emph{visualize} the approximation. For example, we can visualize the approximation by projection to the active coordinates when $r$ is chosen to be $1$ or $2$ based on the decay and gaps in the eigenvalues. These subsequent visualizations are referred to as \emph{shadow plots} \cite{Grey2017} or graphs $\lbrace (\boldsymbol{W}_r^T\boldsymbol{\theta}_i, S(\boldsymbol{\theta}_i)\rbrace_{i=1}^N$ for $N$ samples $\boldsymbol{\theta}_i$ drawn uniformly (for this application). A strong decay leading to a small sum of trailing eigenvalues implies an improved approximation over relatively few important directions while larger gaps in eigenvalues imply an improved approximation to the low-dimensional subspace \cite{Constantine2015}. Identifying if this structure exists depends on the decay and gaps in eigenvalues. We can subsequently exploit any reduced dimensional visualization and approximation to simplify our problem \eqref{eq:multiopt}. However, we must reconcile that our problem of interest involves two separate computations of throughput, $S_{\mathcal{L}}$ and $S_{\mathcal{W}}$.

\subsection{Mixing Disparate Subspaces}
Independently approximating active subspaces for the objectives $S_{\mathcal{L}}$ and $S_{\mathcal{W}}$ generally results in different subspaces of the shared parameter domain. The next challenge is to define a common subspace that, while sub-optimal for each objective, is nevertheless sufficient to capture the variability of both simultaneously. Assume that we can reduce important parameter combinations to a common dimension $r$ of potentially distinct subspaces. These subspaces are spanned by the column spaces of $\boldsymbol{W}_{r,\mathcal{L}}$ and $\boldsymbol{W}_{r, \mathcal{W}}$ chosen as the first $r$ eigenvectors resulting from separate approximations of \eqref{eq:C} for LAA and Wi-Fi throughputs, respectively. The challenge is to appropriately ``mix'' the subspaces so we may formulate a solution to \eqref{eq:multiopt} over a common dimension reduction. 

One method to find an appropriate subspace mix is to take the union of both subspaces. However, if $r\geq 2$ and $\text{Range}(\boldsymbol{W}_{r,\mathcal{L}}) \cap \text{Range}(\boldsymbol{W}_{r, \mathcal{W}}) = \lbrace \boldsymbol{0} \rbrace$ then the combined subspace dimension is inflated and  visualization of subsequent convex approximations becomes challenging. We use interpolation between the two subspaces to overcome these difficulties and retain the common reduction to an $r$-dimensional subspace. The space of all $r$-dimensional subspaces in $\mathbb{R}^m$ is the $r(m-r)$-dimension Grassmann manifold (Grassmannian\footnote{Formally, an element of the Grassmannian is an equivalence class, $[\boldsymbol{U}_r]$, of all orthogonal matrices whose first $r$ columns span the same subspace as $\boldsymbol{U}_r \in \mathbb{R}^{m \times r}$. That is, the equivalence relation $X \sim Y$ is given by $\text{Range}(X) =\text{Range}(Y)$ denoted $[X]$ or $[Y]$ for $X,Y \in \mathbb{R}^{m \times r}$ full rank with orthonormal columns.}) denoted $\text{Gr}(r,m)$ \cite{edelman1998geometry}. Utilizing the analytic form of a geodesic over the Grassmannian \cite{edelman1998geometry}, we can smoothly interpolate from one subspace to another---an interpolation which is, in general, non-linear. This is particularly useful because the \emph{distance} between any two subspaces along such a path, 
$
[\boldsymbol{U}_r]:\mathbb{R} \rightarrow \text{Gr}(r, m):s \mapsto [\boldsymbol{U}_r(s)] \,\, \text{for all}\,\, s \in [0,1],
$
is minimized between the two subspaces $\text{Range}(\boldsymbol{W}_{r, \mathcal{L}}), \text{Range}(\boldsymbol{W}_{r, \mathcal{W}}) \in \text{Gr}(r,m)$ defining the interpolation. That is, the geodesic $[\boldsymbol{U}_r(s)]$ minimizes the distance between $\text{Range}(\boldsymbol{W}_{r, \mathcal{L}})$ and $\text{Range}(\boldsymbol{W}_{r, \mathcal{W}})$ while still constituting an $r$-dimensional subspace in $\mathbb{R}^m$.

\subsection{Ridge Optimization}
After approximating $\boldsymbol{W}_{r, \mathcal{L}}$ and $\boldsymbol{W}_{r, \mathcal{W}}$ we must make an informed decision to take the union of subspaces or compute a new subspace $\text{Range}(\boldsymbol{U}_r)$ against some criteria parametrized over the Grassmannian geodesic. Then we may restate the original problem with a common dimension reduction, $\boldsymbol{y} = \boldsymbol{U}_r^T\boldsymbol{\theta}$, utilizing updated approximations over $r < m$ combined/mixed active coordinates,
\begin{equation} \label{eq:r_multiopt}
    \underset{\boldsymbol{y} \in \mathcal{Y}}{\text{maximize}} \,\,tH_{\mathcal{L}}(\boldsymbol{y}; \boldsymbol{x}) + (1-t)H_{\mathcal{W}}(\boldsymbol{y}; \boldsymbol{x}),
\end{equation}
for all $t \in [0,1]$. Once again, this optimization problem involves a closed and bounded feasible domain of parameter values $\mathcal{Y} = \lbrace \boldsymbol{y} \in \mathbb{R}^r \,:\, \boldsymbol{y} = \boldsymbol{U}_r^T\boldsymbol{\theta},\,\, \boldsymbol{\theta}\in \mathcal{D}\rbrace$ which remains convex for convex $\mathcal{D}$ and a new subspace $\text{Range}(\boldsymbol{U}_r)$. The utility of the dimension reduction is the ability to formulate a continuous \emph{trace} of the Pareto front \cite{bolten2020tracing}---involving the inverse of a convex combination of Hessians---in fewer dimensions. This is supplemented by visualization in the case $r=1$ or $r=2$ providing \emph{empirical evidence of convexity}. The resulting convex approximations and visualizations are summarized in Section \ref{sec:Numerics}.

\subsection{Computational Considerations}
In order to approximate the eigenspaces of $\boldsymbol{C}_{\mathcal{L}}$ and $\boldsymbol{C}_{\mathcal{W}}$ for the separate responses \eqref{eq:throughputs} we must first approximate the gradients of the network throughput responses which are not available in an analytic form. Specifically, we use forward finite difference approximations and Monte Carlo as a quadrature to approximate the integral of partial derivatives in \eqref{eq:C}. These computations are supplemented by a rescaling of all parameters to a unit-less domain which permits consistent finite-difference step sizes.

The rescaling transformation is chosen based on the provided upper and lower bounds, summarized in Table \ref{Tb1}. This ensures that the scale of any one parameter does not influence  finite difference approximations. Moreover, this alleviates the need for an interpretation or justification when taking linear combinations of parameters with differing units. Because the throughput calculations involve parameter combinations appearing as exponents in the composition of a variety of computations, we use a uniform sampling of log-scaled parameter values. This transforms parameters appearing as exponents to appear as coefficients---a useful transformation given that we ultimately seek an approximation of linear combinations of parameters inherent to the definition of a subspace.

The resulting scaling of the domain is achieved by the composition of transformations
$\boldsymbol{\tilde{\theta}} = \boldsymbol{M}\ln(\boldsymbol{\theta}) + \boldsymbol{b}$
where $ \boldsymbol{M} = \text{diag}(2/(\ln(\theta_{u,1}) - \ln(\theta_{\ell,1})),\dots,2/(\ln(\theta_{u,m}) - \ln(\theta_{\ell,m})))$, $\boldsymbol{b} = -\boldsymbol{M}\mathbb{E}[\ln(\boldsymbol{\theta})]$, and $\ln(\cdot)$ is taken component-wise. To compute this transformation, we take $\theta_{\ell,i}$ and $\theta_{u,i}$ as the $i$-th entries of the lower and upper bounds and $\mathbb{E}[\ln(\boldsymbol{\theta})]$ the mean of $\ln(\boldsymbol{\theta}) \sim U_m[\ln(\boldsymbol{\theta}_{\ell}), \ln(\boldsymbol{\theta}_u)]$. This particular choice of scaling ensures $\boldsymbol{\tilde{\theta}}\in [-1, 1]^m$ and $\mathbb{E}[\boldsymbol{\tilde{\theta}}] = \boldsymbol{0}$ so the resulting domain is also centered. Lastly, we use Monte Carlo as a constant coefficient quadrature rule to approximate the integral form of the two separate matrices, defined by \eqref{eq:C}, for the two throughputs. The details are provided as Algorithm \ref{alg1}. 

The selection of $r$ in Algorithm \ref{alg1} can be automated by, for example, a heuristic which takes the largest gap in eigenvalues \cite{Constantine2015}. 
For simplicity, we take an exploratory approach to sele-

\begin{algorithm}[t] 
\captionsetup{font=footnotesize}
\caption{Monte Carlo Approximation of Throughput Active Subspaces using Forward Differences}
\footnotesize
    \begin{algorithmic}[1]\label{alg1}
	\REQUIRE Forward maps $S_{\mathcal{L}}$ and $S_{\mathcal{W}}$, small coordinate perturbation $h \geq 0$, fixed scenario parameters $\boldsymbol{x}$, and parameter bounds $\boldsymbol{\theta}_{\ell}$, $\boldsymbol{\theta}_{u} \in \mathbb{R}^m$.
    \STATE Generate $N$ random samples uniformly, $\lbrace \boldsymbol{\tilde{\theta}}_i\rbrace_{i=1}^N \sim U_m[-1,1]$.
    \STATE Compute $\boldsymbol{M}$, $\boldsymbol{M}^{-1}$, and $\boldsymbol{b}$ according to a uniform distribution of log-scale parameters given $\boldsymbol{\theta}_{\ell}$ and $\boldsymbol{\theta}_{u}$.
    \FOR{$i=1$ to $N$}
    \STATE Transform the uniform log-scale sample to the original scale $\boldsymbol{\theta}_i = \text{exp}(\boldsymbol{M}^{-1}(\boldsymbol{\tilde{\theta}}_i - \boldsymbol{b}))$ where the exponential is taken component-wise.
    \STATE Evaluate forward maps $(S_{\mathcal{L}})_i = S_{\mathcal{L}}(\boldsymbol{\theta}_i; \boldsymbol{x})$ and $(S_{\mathcal{W}})_i = S_{\mathcal{W}}(\boldsymbol{\theta_i}; \boldsymbol{x})$.
    \FOR {$j=1$ to $m$}
    \STATE Transform the $j$-th coordinate perturbation to the original input scale
    $\boldsymbol{\theta}_h = \text{exp}(\boldsymbol{M}^{-1}(\boldsymbol{\tilde{\theta}}_i + h\boldsymbol{e}_j - \boldsymbol{b}))$.
    \STATE Approximate the $j$-th entry of the gradient at the $i$-th sample as
    $$
    (\tilde{\nabla} S_{\mathcal{L}})_{i,j} = \frac{S_{\mathcal{L}}(\boldsymbol{\theta}_h; \boldsymbol{x}) - (S_{\mathcal{L}})_i}{h},
    $$
    similarly for $(\tilde{\nabla} S_{\mathcal{W}})_{i,j}$, where $\boldsymbol{e}_j$ is the $j$-th column of the $m$-by-$m$ identity matrix.
    \ENDFOR
    \ENDFOR
    \STATE Take the average of the outer product of approximated gradients as
    $$
    \boldsymbol{\tilde{C}}_{\mathcal{L}} = \frac{1}{N}\sum_{i=1}^N (\tilde{\nabla} S_{\mathcal{L}})_{i,;} \otimes (\tilde{\nabla} S_{\mathcal{L}})_{i,;},
    $$
    similarly for $\boldsymbol{\tilde{C}}_{\mathcal{W}}$, where the tensor (outer) product is taken over the $j$-th index.
    \STATE Approximate the eigenvalue decompositions $$\boldsymbol{\tilde{C}}_{\mathcal{L}} = \boldsymbol{\tilde{W}}_{\mathcal{L}}\boldsymbol{\tilde{\Lambda}}_{\mathcal{L}}\boldsymbol{\tilde{W}}_{\mathcal{L}}^T \quad \text{and}\quad \boldsymbol{\tilde{C}}_{\mathcal{W}} = \boldsymbol{\tilde{W}}_{\mathcal{W}}\boldsymbol{\tilde{\Lambda}}_{\mathcal{W}}\boldsymbol{\tilde{W}}_{\mathcal{W}}^T$$
    ordered by decreasing eigenvalues.
    \STATE Observe the eigenvalue decay and associated gaps to inform a reasonable choice of $r$.
    \RETURN The first $r$ columns of $\boldsymbol{\tilde{W}}_{\mathcal{L}}$ and $\boldsymbol{\tilde{W}}_{\mathcal{W}}$, denoted $\boldsymbol{\tilde{W}}_{r,\mathcal{L}}$ and $\boldsymbol{\tilde{W}}_{r,\mathcal{W}}$.
    \end{algorithmic}
\end{algorithm}
\normalsize
%
\noindent
 -cting $r$ which requires some user-input. We seek a visualization of the response to provide empirical evidence that the throughputs are predominantly convex and hence require $r\leq 2$. We then check that the result offers acceptable approximations of throughputs with sufficient gaps in the second and third eigenvalues suggesting reasonable subspace approximations.

\section{Simulation and Results} \label{sec:Numerics}

We demonstrate the ideas proposed in Section \ref{sec:AS} on the LAA-Wi-Fi coexistence scenario described in Section \ref{sec:sys_model} to maximize both throughputs simultaneously. We apply active subspaces to simplify the multi-criteria optimization problem \eqref{eq:multiopt} by focusing on a reduced set of mixed PHY and MAC layer parameter combinations informing a trace of near Pareto optimal solutions. Table \ref{Tb1} summarizes the scenario parameters and parameter bounds used to inform the throughput computations and domain scaling, respectively. 

\begin{figure*}[ht]
\vspace{0.01in}
\begin{subfigure}{.5\textwidth}
  \centering
  \includegraphics[width=.75\linewidth]{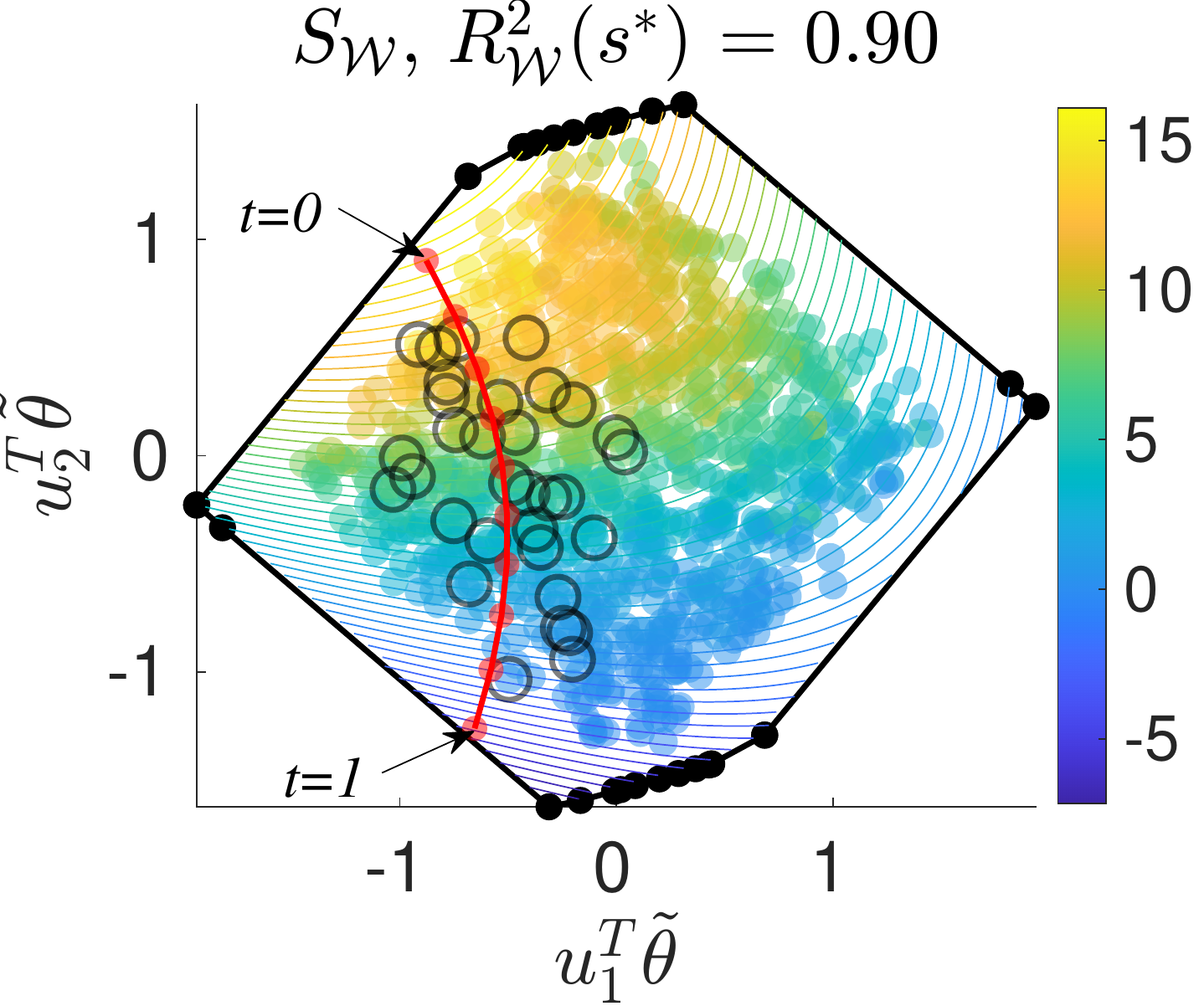}
  \label{fig:sub-first}
  \hfill
  \caption{Wi-Fi Throughput Shadow Plot}
\end{subfigure}
\begin{subfigure}{.5\textwidth}
  \centering
  \includegraphics[width=.75\linewidth]{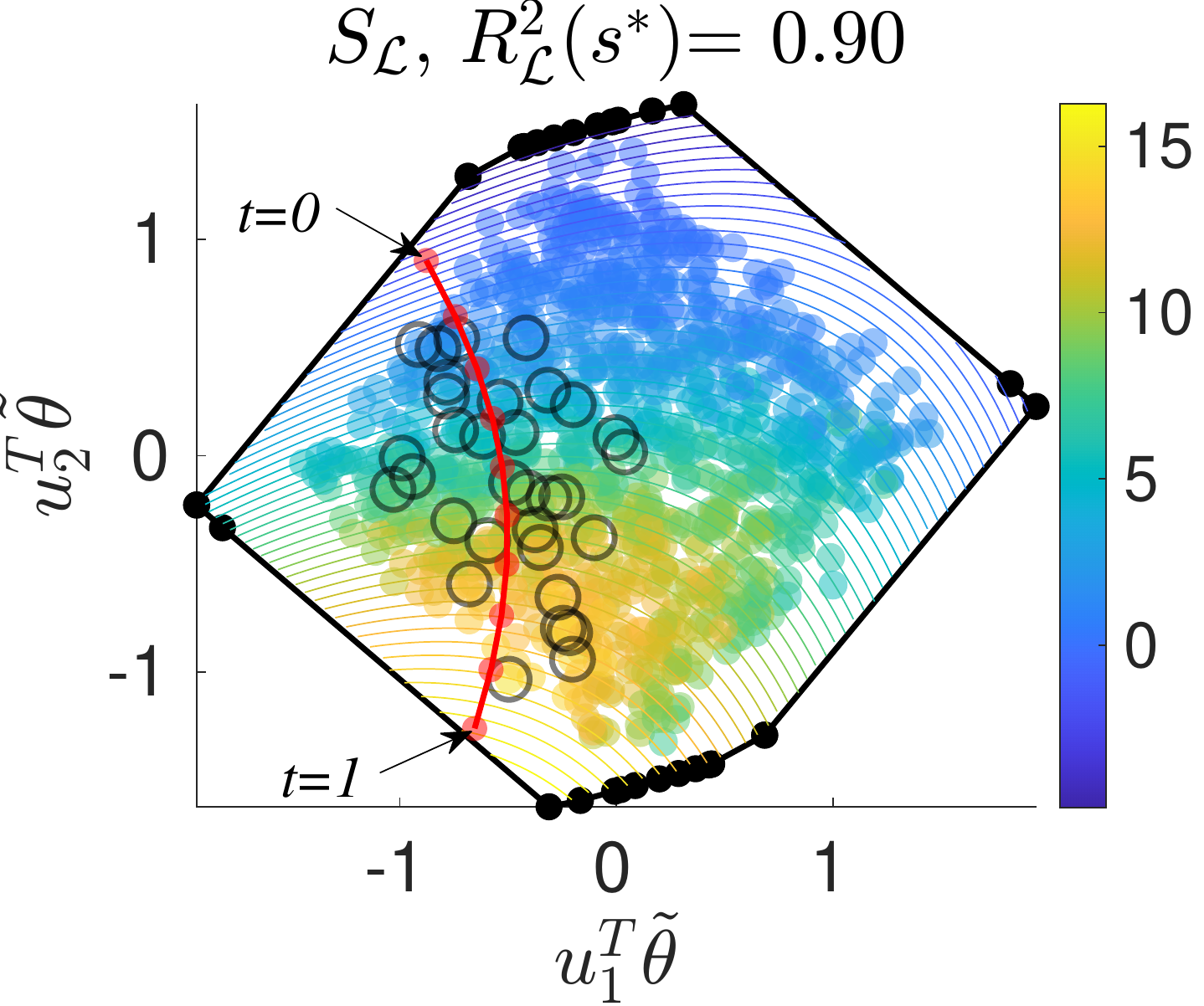}
  \caption{LAA Throughput Shadow Plot}
  \label{fig:sub-second}
\end{subfigure}
\caption{Pareto trace of quadratic ridge profiles. The quadratic Pareto trace (red curve and dots) is overlaid on a shadow plot over the mixed coordinates (colored scatter) with the projected bounds and vertices of the domain (black dots and lines), $\mathcal{Y}$. The quadratic approximations (colored contours) are computed as least-squares fits over the mixed subspace coordinates. Also depicted is the projection of the non-dominated domain values from the $N=1000$ random samples (black circles). The trace begins at $t=0$ with near maximum quadratic Wi-Fi throughput and we move (smoothly) along the red curve to $t=1$ obtaining near maximum quadratic LAA throughput---maintaining an approximately best trade-off over the entire curve restricted to $\mathcal{Y}$.}
\label{fig:fig_1}
\end{figure*}
\normalsize

\begin{figure}[ht]
\centerline{\includegraphics[width=1\linewidth]{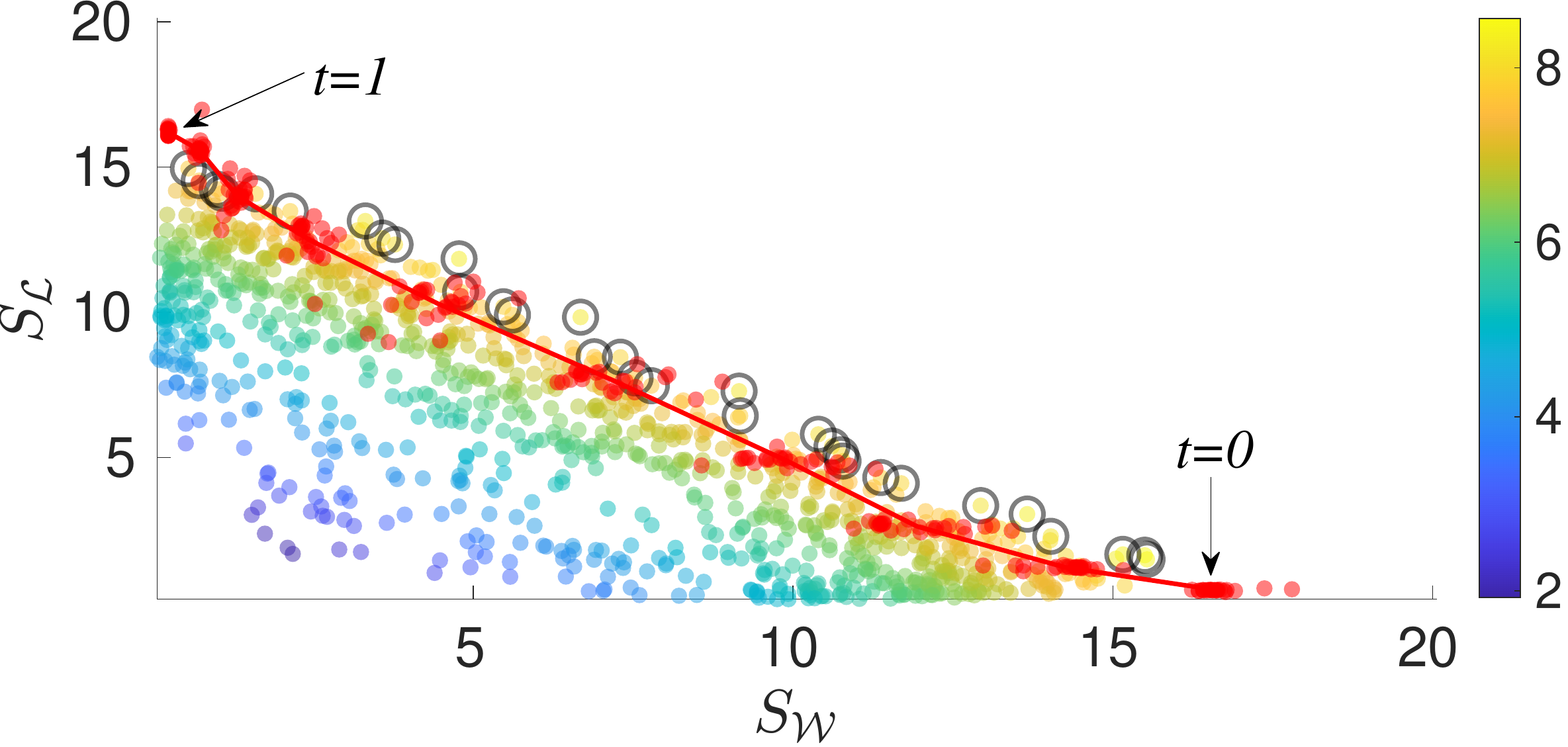}}
\caption{Approximation of the Pareto front resulting from the quadratic trace. The approximate Pareto front (red curve) is shown with the non-dominated throughput values (black circles) and scatter of $N = 1000$ random responses colored according to the averaged throughputs. The red curve is the image of the continuous trace through parameter space (visualized as the red curve in Fig. \ref{fig:fig_1}) representing a near Pareto optimal set of solutions.}
\label{fig_2}
\end{figure}
\normalsize
\raggedbottom

The numerical experiment utilizes $N = 1000$ samples resulting in $N(m+1) = 18,000$ function evaluations to compute forward differences with $h = 10^{-6}$. The resulting decay in eigenvalues gave relatively accurate degree-$2$ to degree-$5$ least-squares polynomial approximations---computed utilizing sets of coordinate-output pairs $\lbrace (\boldsymbol{\tilde{W}}_{r,\mathcal{L}}^T\boldsymbol{\tilde{\theta}}_i, (S_{\mathcal{L}})_i)\rbrace_{i=1}^N$ and $\lbrace (\boldsymbol{\tilde{W}}_{r,\mathcal{W}}^T\boldsymbol{\tilde{\theta}}_i, (S_{\mathcal{W}})_i)\rbrace_{i=1}^N$---with varying coefficients of determination between $0.83-0.98$ for both throughputs when $r=1$ or $r=2$. In an effort to improve the ridge approximation while retaining the ability to visualize the response for the quadratic polynomial, we fix $r=2$ and mix the subspaces according to a quadratic approximation with corresponding coefficients of determination $R^2_{\mathcal{L}}$ and $R^2_{\mathcal{W}}$---computed using new sets of subspace coordinates and throughput (coordinate-output) pairs. We select a criteria to mix subspaces achieving a balanced approximation when $r=2$. This offers a subproblem,
\begin{equation} \label{eq:sub_multi}
    \underset{s \in [0,1]}{\text{maximize}}\,\,\min\lbrace R^2_{\mathcal{L}}(s), R^2_{\mathcal{W}}(s)\rbrace,
\end{equation}
where the separate throughput coefficients of determination, $R^2_{\mathcal{L}}(s)$ and $R^2_{\mathcal{W}}(s)$, are parametrized by successive quadratic fits over new coordinates $\boldsymbol{y}_i = \boldsymbol{U}^T_r(s)\boldsymbol{\tilde{\theta}}_i$ for all $i=1,\dots,N$ defined by the Grassmannian geodesic $[\boldsymbol{U}_r(s)]$ beginning at $\text{Range}(\boldsymbol{\tilde{W}}_{r,\mathcal{W}})$ and ending at $\text{Range}(\boldsymbol{\tilde{W}}_{r,\mathcal{L}})$. 

The univariate subproblem in \eqref{eq:sub_multi} can be visualized and, in this experiment, achieved a unique maximizing argument $s^* \in [0,1]$ admitting a mixed subspace with orthonormal basis given by two columns in a matrix $\boldsymbol{U}_r(s^*) = [\boldsymbol{u}_1 \,\, \boldsymbol{u}_2]$, i.e., $ [\boldsymbol{u}_1 \,\, \boldsymbol{u}_2] \in \mathbb{R}^{m\times 2}$ taken at the optimal $s^*$ is the representative element of $[\boldsymbol{U}_r(s^*)]$. The coefficients of determination varied monotonically and intersected over the Grassmannian parametrization. Consequently, the subproblem results in an approximately equal criteria for the accuracy of the quadratic ridge profiles $H_{\mathcal{W}}$ and $H_{\mathcal{L}}$, i.e., $R^2_{\mathcal{L}}(s^*) \approx R^2_{\mathcal{W}}(s^*) \approx 0.90$. The choice of quadratic least-squares approximation over mixed active coordinates admits an analytic form \eqref{eq:quad_trace} for the Pareto trace of \eqref{eq:r_multiopt}. The analytic form of the quadratic trace is taken as a convenience in contrast to a higher-order polynomial approximation---or alternative approximation---and subsequent trace. Part of the utility afforded by the dimension reduction is simultaneously fitting and visualizing higher-order approximations over the low-dimensional coordinates for fixed $N$ random samples of coordinate-output pairs. However, the quadratic fits were deemed reasonable approximations admitting convexity which can be observed directly in the shadow plots. The convex quadratic ridge approximations, quadratic Pareto trace, and projected boundary of the domain over the mixed subspace are shown in Fig. \ref{fig:fig_1}. Additionally, the Pareto front approximation resulting from the quadratic trace is shown with the non-dominated designs in Fig. \ref{fig_2}.

Observing Fig. \ref{fig:fig_1}, the continuous Pareto trace over the subspace coordinates (red curve) moves approximately through the collection of projected non-dominated designs (black circles). The non-dominated designs are determined from the $N=1000$ random samples; sorted according to \cite{kung1975}. However, it is not immediately clear through this visualization that the non-dominated designs constitute elements of an alternative continuous approximation of the Pareto front---perhaps represented by an alternative low-dimensional manifold. Instead, we have supplemented a continuous parametrization of the Pareto front which is implicitly regularized as a solution over a low-dimension subspace. However, there are infinite $\boldsymbol{\theta}$ in the original parameter space which correspond to points along the trace depicted in Fig. \ref{fig:fig_1}---i.e., infinitely many $m-r$ inactive coordinate values which may change throughputs albeit significantly less than the two mixed active coordinates, $y_1=\boldsymbol{u}_1^T\boldsymbol{\tilde{\theta}}$ and $y_2=\boldsymbol{u}_2^T\boldsymbol{\tilde{\theta}}$. To reconcile the choice of infinitely many inactive coordinates, we visualize subsets of 25 inactive coordinate samples drawn randomly over $\text{Null}(\boldsymbol{U}^T_r(s^*))$ along a discretization of the trace. Fig. \ref{fig_2} depicts the corresponding throughput evaluations from the inactive samples as red dots along the approximated Pareto front---the red line connects conditional averages of throughputs over inactive samples along corresponding points over the trace. The visualization emphasizes that the throughputs change significantly less over the inactive coordinates in contrast to the range of values observed over the trace. 

There is some bias in the approximation of the Pareto front (red curve) in Fig. \ref{fig_2} which is not a least-squares curve of non-dominated throughput values (black circles) potentially due in part to the quadratic ridge approximations. We expect refinements to these approximations will further improve the continuous Pareto approximation (shown in red in Fig. \ref{fig_2}).
\raggedbottom

\section{Conclusion \& Future Work} 


We have proposed a technique to simultaneously optimize the performance of multiple MNOs sharing a single spectrum resource. An exploratory analysis utilizing an example of LTE-LAA coexistence with Wi-Fi network identified a common subspace-based dimension reduction of a basic network-behavior model. This enabled visualizations and low-dimensional approximations which led to a \emph{continuous} approximation of the Pareto frontier for the multi-criteria problem of maximizing all convex combinations of network throughputs over MAC and PHY parameters. Such a result simplifies the search for parameters which enable high quality performance of both networks, particularly compared to approaches which do not operate on a reduced parameter space. Analysis of the LAA-Wi-Fi example revealed an explainable and interpretable solution to an otherwise challenging problem---\emph{devoid of any known convexity or degeneracy until subsequent exploration}. 

Future work will incorporate alternative low-dimensional approximations including both cases of Grassmannian mixing and subspace unions to improve the trace. We will also summarize a sensitivity analysis, active subspace approximation diagnostics, and the parametrization of a predominantly flat manifold of near Pareto optimal solutions.

Implementations of this work are anticipated to enable spectrum sharing in unlicensed bands by simplifying the design of wireless network operation (control) and architecture. In on-going work, we are investigating methods to estimate the set of PHY and MAC layer parameters which lead to acceptable values of KPIs for multiple coexisting wireless networks. Principled approaches for quantifying this novel concept, coined the region of wireless coexistence (RWC), can lead to significantly-improved network operation. Real-world RWCs are high-dimensional sets with tens or hundreds of parameters, and existing models are designed for low-dimensional problems. We aim to use the dimensionality-reduction techniques described above to combat issues with high dimensionality. We hope to additionally accelerate existing RWC algorithms by efficiently taking advantage of a parameter-manifold of near-optimal solutions with reduced intrinsic dimension by applying the methods described in this work. We anticipate that new methods which leverage compositions with the presented dimension reduction will benefit from accelerations and regularization in an otherwise challenging high-dimensional formulation.

\bibliographystyle{IEEEtran}
\balance
\bibliography{bibliography}

\end{document}